\documentclass{article}

\usepackage[textwidth=15cm,textheight=21.7cm]{geometry}
\usepackage{graphicx}
\usepackage[square,numbers]{natbib}
\usepackage{graphicx}
\usepackage{physics}
\usepackage{amsmath}
\usepackage{amsfonts}
\usepackage{amsthm}
\usepackage{siunitx}
\usepackage{bbm}
\usepackage{bm}
\usepackage[title]{appendix}

\newtheorem{theorem}{Theorem}

\newcommand{\hilb}[0]{%
  \mathcal{H}%
}

\newcommand{\tensp}[1]{%
  \mathbin{\mathop{\otimes}\displaylimits_{#1}}%
}

\bibliographystyle{abbrvnat}

\begin{document}

\title{On the equivalence between quantum and random walks on finite graphs}

\author{Matheus G. Andrade \and
        Franklin Marquezino \and
        Daniel R. Figueiredo \\
        Department of Computer and System Engineering COPPE \\
        Federal University of Rio de Janeiro (UFRJ), Brazil \\
        \{guedesm, franklin, daniel\}@cos.ufrj.br
}



\maketitle

\begin{abstract}
Quantum walks on graphs are ubiquitous in quantum computing finding a myriad of applications. Likewise, random walks on graphs are a fundamental building block for a large number of algorithms with diverse applications.
While the relationship between quantum and random walks has been recently discussed in
specific scenarios, this work establishes a formal equivalence between the processes on arbitrary
finite graphs and general conditions for shift and coin operators. It requires empowering random
walks with time heterogeneity, where the transition probability of the walker is non-uniform and time dependent.
The equivalence is obtained by equating the probability of measuring the quantum walk on a given node
of the graph and the probability that the random walk is at that same node, for all nodes and time steps.
The result is given by the construction procedure of a matrix sequence for the random walk that
yields the exact same vertex probability distribution sequence of any given quantum walk, including the scenario with multiple interfering walkers.
Interestingly, these matrices allows for a different simulation approach for quantum walks where node samples respect
neighbor locality and convergence is guaranteed by the law of large numbers, enabling efficient (polynomial) sampling of quantum graph trajectories (paths).
Furthermore, the complexity of constructing 
this sequence of matrices is discussed in the general case.
\end{abstract}

\section{Introduction}
Quantum walks on graphs are a prominent area of research in quantum computing
inspired to be the quantum analogue of classical random walks~\cite{aharonov2001quantum, aharonov1993quantum}.
As with random walks, quantum walks have proven to be an insightful tool for designing
quantum algorithms, culminating on efficient solutions for problems such as element distinctness
\cite{ambainis2007quantum}, marked-vertex searching~\cite{Magniez2004search} and Hamiltonian simulation~\cite{berry2015hamiltonian}. Among its marvelous capabilities,
quantum walks were shown to perform universal quantum computation for both continuous-~\cite{childs2009universal} and discrete-time~\cite{lovett2010universal} models.
Extensive surveys covering multiple aspects of quantum walks can be found in the literature~\cite{kempe2009quantum, portugal2013quantum, venegas2012quantum}. 

A few discrete-time models for quantum walks have shown increased community interest over the past years
\cite{aharonov2001quantum, portugal2016staggered, szegedy2004walks}. The coined model works on an extended
Hilbert space which codifies both graph vertices and walker direction and has pioneered 
discrete-time models~\cite{aharonov2001quantum}, where the coin space was
introduced to allow unitary evolution. The later
Szegedy model~\cite{szegedy2004walks} performs quantization over a bipartite Markov chain. In this model, a reflection based operator is constructed once the transition probabilities to cross the bipartite sets are defined. 
The operators of the Szegedy model have a well described spectra and its properties
are mainly derived from spectral analysis. The staggered model~\cite{portugal2016staggered} is based on
graph tesselations and generalizes the bipartite construction of the Szegedy walk. This work focuses
on the early coined model.

The case of multiple walkers has also been investigated in different contexts. As with the single quantum walker, the interacting multi-walker model was also shown to be universal for quantum computing~\cite{childs2013universal}. Non-interacting multi-walker models on arbitrary graphs have been treated generically, with proposed physical implementations~\cite{rohde2011multi}. The two-walker case was specifically analyzed, leading to interesting results~\cite{siloi2017walks, vstefanak2006meeting, xue2012walkers}.

In the classical realm, random walks on graphs~\cite{lovasz1993random} have been extensively used to drive the design
of classical algorithms for a myriad of problems in diverse areas of computing, ranging from 
sampling~\cite{hastings70mcmc} to user recommendation~\cite{page1999pagerank}.
Most applications of random walks assume time homogeneity, which implies that the walker behavior, as it
moves on the graph, does not change over time. Time
homogeneity favors analytical tractability and important known results have been derived under this restriction, such as the conditions for time convergence of the probability distribution~\cite{haggstrom2002finite}.
On the other hand, non-homogeneity, or time-dependence,
has been explored on particular niches, such as the celebrated
Simulated Annealing meta-heuristic for optimization problems~\cite{kirkpatrick1983optimization}.

The connection between quantum and random walks has been investigated and it is clear that homogeneous random walks cannot match quantum walks on arbitrary graphs. However, it has been shown that the evolution of quantum walks on infinite lines are partially described by time-homogeneous Markovian processes~\cite{romanelli2004quantum}. Its probability evolution can be expressed as a time-independent Markov process with an additional interference term. This separation method was further used to construct a master equation for the global chirality distribution (GCD) of the quantum walk~\cite{romaneli2010gcd}, showing a convergence behavior of homogeneous Markovian processes for the GCD. In addition, a relationship between the walk dimension of both processes was explored through the use of renormalization-group analysis (RG) to evaluate scaling factors of the quantum walk limiting distribution~\cite{boettcher2015relation}. This analysis allows for the calculation of the walk dimension for quantum walks on some non-trivial graphs and has lead to the conjecture that the number of walk dimensions for the quantum case is half of that of the random walk, a well known result in the case of homogeneous lattices~\cite{boettcher2015relation}.

In the search for their equivalence, a recent work has shown that non-homogeneous random walks can yield probability 
sequences identical to the probabilities of quantum walks on the infinite integer line~\cite{montero2017quantum}.
In this context, an analysis was carried out to generate a given desired distribution sequence over the integers with
time- and site-dependent discrete-time coined quantum walks and non-homogeneous random walks.
The matching is performed by constructing a random walk with time-varying probabilities that has the same distribution sequence of a Hadamard-coined quantum walk on the infinite line.

A different perspective is the Quantum Stochastic Walk (QSW) model, a generalization of both quantum and
random walks which accounts for non-unitary transformations~\cite{whitfield2010quantum}. Using the 
formalism of density matrices, a super operator is constructed to perform both Hamiltonian (coherent)
and stochastic evolution based on the Kossakowski-Lindblad master equation. The walk behavior over a graph
is achieved upon connectivity restrictions on the terms that map the states of the system. Depending on 
how such terms are chosen, the behavior of both classical and quantum walks can be obtained, as well as
the behavior of a more general quantum stochastic process not captured by either of them. However, QSW has no bearing on the equivalence of random and quantum walks.

This article focus on the connection between unitary discrete-time coined quantum walks and random walks
on finite graphs, and formally proves that the probability evolution of any quantum walk
can be matched exactly by a time-dependent random walk on the same underlying graph. This connection stems
from the locality property of both random and quantum walks. Our main 
contribution is a prescription for the time-dependent matrix that drive the random walk dynamics in order to
produce the same probability distribution sequence of any quantum walk.
More precisely, when the random walk evolves according to these matrices, its probability distributions over the vertices are
identical to that of the quantum walk. While the sequence of matrices describing the random walk clearly depends on the graph and the quantum walk operators, the prescription is very general and requires mild assumptions, such as unitarity.

Furthermore, the equivalence is also established for the case of interacting multiple quantum walkers. The interaction model is taken to be very general, with restrictions solely on the walkers' movement. The equivalence is provided by equating the evolution of the joint probability distribution of the multiple walkers with the joint distribution of the same number of random walkers. The proof for the single-walker case is gracefully extended to the multiple walkers through arguments of unitarity. As the quantum case, the state representation for the random walk has to increase in order to accommodate all possible movements. This behavior is captured by constructing a graph in which nodes represent the current position of the walkers. The process can than be viewed as a single random walk on a much larger graph.

A direct consequence of the time-dependent matrices that provide the equivalence is the possibility to simulate a time-dependent random walk on the graph which is equivalent to its quantum walk counterpart. This simulation captures quantum behavior while generating samples that preserve neighbor locality. Differently than the commonly used quantum walk simulation procedure, the samples obtained from the random walk simulation are paths of the graph, allowing trajectories driven by the quantum behavior to be sampled.

It is worth noting that quantum walks on graphs resembles Feynman's path integral formulation for quantum mechanics~\cite{feynman2010quantum} in discrete time and space, in the sense that the probability amplitude of a discrete-time walker system at instant $t$ is described by summing up the contributions of all possible paths in the graph with length $t$ connecting the initial and final states.
In an essential way, the simulation of trajectories through random walks is a procedure for sampling paths from quantum walks following a trajectory distribution in which, for every instant $t$, the marginal vertex distribution coalesces to the quantum walk vertex distribution. This provides a powerful tool for efficient simulation of quantum walk trajectories on arbitrary graphs.



The remainder of this article is structured as follows. The notation for 
both quantum and random walks, as well as formal definitions, appears in Section~\ref{sec:representation}. The Theorem that shows how to construct the equivalent non-homogeneous random walk for any given quantum walk is stated and proved in Section~\ref{sec:equivalence}. In Section~\ref{sec:multiple}, the results are generalized for the case of multiple walkers.
The simulation of 
trajectories from the random walk matrices is treated in Section~\ref{sec:simulation}. An evaluation of the time complexity of
the procedure to construct the transition matrices appears in Section~\ref{sec:complexity}. Final remarks are drawn in Section~\ref{sec:conclusions}.

\section{Quantum and random walks}\label{sec:representation}

Let $G = (V,E)$ be a directed graph obtained from an non-directed graph by introducing two directed edges for each initial one, i.e $(u, v) \in E$ if, and only if $(v, u) \in E$. Let the sets $N^+(v) \subseteq {V}$ and $N^{-}(v) \subseteq {V}$ to denote the sets of outward and inward neighbors of $v$, respectively. 

\subsection{Quantum Walks}

A discrete-time coined quantum walk on a graph $G$ is an evolution process of a complex vector in a Hilbert space $\hilb_w \subseteq \hilb_v \tensp{} \hilb_c$ defined by the graph structure \cite{portugal2016coined}. The vertex space $\hilb_v$ has dimension $|V|$ and codifies the vertices of the graph, while the coin space $\hilb_c$ denotes the degrees of freedom of the walker movements, with dimension given by the maximum degree of the graph $D=\max\{d(v): v \in V\}$. Precisely, $\hilb_w$ is $\hilb_v \tensp{} \hilb_c$ only when $G$ is a regular graph.

Denoting $\{\ket{c}\}$ and $\{\ket{v}\}$, respectively, as the basis for the spaces $\hilb_c$ and $\hilb_v$, and letting $C_v = \{0,..., d(v) - 1\}$ be the integer set for the number of outward edges of a node $v$, the basis for $\hilb_w$ is $\{\ket{v,c}: v \in V, c \in C_v\}$. Assuming $\ket{\Psi(t)}$ is the walker wavefunction at discrete time instant $t$, the quantum walk evolution is given by the action of two unitary operators $S : \hilb_w \rightarrow \hilb_w$ and $W : \hilb_w \rightarrow \hilb_w$ on the system state vector as
\begin{align}
    & \ket{\Psi(t + 1)} = SW \ket{\Psi(t)}. \label{eq:qwalk}
\end{align}
In this work, we assume that both $S$ and $W$ may vary with time, although the dependence will be omitted in order to simplify notation.

\subsubsection{The coin operator}

The coin operator ($W$) acts on the degrees of freedom of the walker. The most general coin operator is given by
\begin{align}
    & W = \sum_{v \in V}\dyad{v}{v} \tensp{} \sum_{j \in C_v}\sum_{k \in C_v} \dyad{j}{k} w_{vjk}, \label{eq:coin_op}
\end{align}
which is the form considered throughout this work. This operator is responsible for mixing the amplitude of a given state $\ket{v,c}$ with states $\ket{v, c'}$ such that $c, c' \in C_v$, i.e degrees of freedom of the same vertex, through weights $w_{vc'c}$. The mixing behavior is enlightened when one observes the action of $W$ on a generic state vector $\ket{v,c}$

\begin{align}
    & W\ket{v,c} = \sum_{u \in V}\sum_{i,j \in C_v}w_{uji} \dyad{u}{u} \tensp{} \dyad{j}{i} \ket{v,c} = \sum_{j \in C_v}\ket{v, j}w_{vjc}. \label{eq:amp_mixture}
\end{align}


For $W$ to be unitary, one must impose conditions on the complex values of $w_{vjc}$. In particular, the product operator $WW^{\dagger}$ is given as
\begin{align}
    &  W^{\dagger}W = \sum_{v \in V}\sum_{j \in C_v}\sum_{i \in C_v}\sum_{l \in C_v}\sum_{k \in C_v} w^{*}_{vij}w_{vlk} \dyad{v}{v} \tensp{} \dyad{j}{i}  \tensp{} \dyad{l}{k}, \\
    & W^{\dagger}W = \sum_{v \in V}\sum_{j \in C_v}\sum_{i \in C_v}\sum_{k \in C_v} w^{*}_{vij}w_{vik} \dyad{v}{v} \tensp{} \dyad{j}{k} \label{eq:coin_coef}
\end{align}
and the coefficients of the right hand side of Equation \ref{eq:coin_coef} must obey
\begin{align}
    & \sum_{i \in C_v} |w_{vik}|^2 = 1 : v \in V, \text{ and } \label{eq:coin_eq} \\
    & \sum_{i \in C_v}\sum_{j \in C_v}\sum_{k \neq j} w^{*}_{vij}w_{vik} = 0: v \in V.\label{eq:coin_diff}
\end{align}

Two coin operators which will be important further ahead are the Hadamard and the Grover operators. The $D\text{-dimensinoal}$ Hadamard operator $H_D$ can be constructed for Hilbert spaces with dimension of the form $D = 2^k$, for $k \in \{1,2,...\}$. Its formal definition
is given by
\begin{align}
    & H_D = H_{\frac{D}{2}} \tensp{} H_2 \label{eq:hadamard_op}
\end{align}
where
\begin{align}
    & H_2 = \frac{1}{\sqrt{2}}\begin{pmatrix}
    1 & 1 \\
    1 & -1
    \end{pmatrix}. \label{eq:had2}
\end{align}{}
On the other hand, the Grover operator can be defined for Hilbert spaces with arbitrary dimension, being formally represented as
\begin{align}
    & G = \sum_{c = 0}^{D - 1}\sum_{c' = 0}^{D - 1}\dyad{c}{c'} - 2I, \label{eq:grover_op}
\end{align}
where $\ket{c}$ denotes vectors of the computational basis.

\subsubsection{The shift operator}

The shift, or swap, operator ($S$) acts by moving the mixed amplitudes created by the operator $W$ through outward edges. Let $\eta: V \cross C \rightarrow V$ be a mapping of vertices with its outward neighbors through an ordering of its outward edges, \textit{i.e} $u = \eta(v,c)$ is the $c\text{-th}$ outward neighbor of $v$; let $\sigma: V \cross V \rightarrow C$ be the function that maps a degree of freedom $c$ of an outward edge of a vertex $v$ with one of its inward neighbors $u$, \textit{i.e} $\sigma(u, v) = c$ is an association of the degree of freedom $c$ of $v = \eta(u, c')$ with $u$; and let $\sigma^{-1}$ to be the inverse association of this pair, \textit{i.e} $\sigma^{-1}(\eta(v,c), u) = c'$. The action of the shift operator is formally defined as

\begin{align}
    & \ket{u,c} \rightarrow \ket{\eta(u, c), \sigma(u, \eta(u, c))}. \label{eq:shift_def}
\end{align}

The functions $\eta$ and $\sigma$ can be defined in multiple ways as long as the operator remains unitary and the graph edges are respected. In fact, different definitions for these functions lead to different dynamics for the state amplitude. The action of $SW$ on a generic state vector
$$\ket{\Psi(t)} = \sum_{v \in V}\sum_{c \in C_v}\Psi(v, c, t)\ket{v,c}$$
is given by

\begin{align}
    & \ket{\Psi(t + 1)} = SW\sum_{v \in V}\sum_{c \in C_v}\Psi(v, c, t)\ket{v,c} \\
    & \ket{\Psi(t + 1)} = \sum_{v \in V}\sum_{c \in C_v}\sum_{j \in C_v}w_{vjc}\Psi(v, c, t)\ket{\eta(v, j), \sigma(v, \eta(v, j))} \\
    & \ket{\Psi(t + 1)} = \sum_{v \in V}\sum_{u \in N^{-}(v)}\sum_{j \in C_u}(\Psi(u, j, t)w_{(u,\sigma^{-1}(u,v),j)}) \ket{v, \sigma(u, v)} \label{eq:qw_evo}
\end{align}

Equation \ref{eq:qw_evo} is obtained by noting that each degree of freedom of a given vertex $v$ corresponds to exactly one neighbor of $v$ and by fixing the vertex element of the basis state vector from the summation through a variable substitution from $\eta$. Thus, the probability of seeing the walker on a given state is $\rho(v,c,t) = \abs{\Psi(v,c,t)}^2$. Since the walker states do form a basis for the state space, the total probability of seeing the walker on a given vertex is
\begin{align}
    & \rho(v, t) = \sum_{c \in C_v} \abs{\Psi(v,c,t)}^2,
    \label{eq:v_prob} 
\end{align}
which combined with Equation \ref{eq:qw_evo} leads to
\begin{align}
    & \rho(v, t) = \sum_{u \in N^{-}(v)} \abs{\sum_{j \in C_u}\Psi(u, j, t)w_{(u,\sigma^{-1}(u,v),j)}}^2. \label{eq:qw_evo_prob}
\end{align}

A common swap operator which will be mentioned on further sections of this article is the moving-shift operator, which is simply defined by the relationship
\begin{align}
    & M : \ket{u, c} \rightarrow \ket{\eta(u,c), c}. \label{eq:moving_shift}
\end{align}


\subsection{Non-homogeneous random walks}

A non-homogeneous random walk on a directed graph $G=(V,E)$ is, in essence, a diffusion process of a probability distribution over the vertices of $V$ through the edges of $E$ with time-varying transition (conditional) probabilities. Let $\pi(t) \in \mathbb{R}_{+}^{|V|}$ denote a probability vector (or a discrete probability distribution) over the set $V$ at discrete time instant $t$.
Let $p_{vu}(t) \in [0,1]$ be the transition probability for the walker to step from node $u$ to node $v$, for which holds the law of total probability
and that $p_{vu} > 0$ only if $(u,v) \in E$. The behavior of the random walk is determined by the evolution of its probability distribution given by
\begin{align}
    & \pi_v(t + 1) = \sum_{u \in N^{-}(v)} p_{vu}(t)\pi_u(t). \label{eq:rw_locality}
\end{align}

 Equation \ref{eq:rw_locality} states that the probability of a vertex at instant $t + 1$ is given by a convex combination of the probabilities of its inward neighbors, on the previous instant $t$. From this perspective, the sets of transition probabilities can be defined arbitrarily as long as the law of total probability remains valid, implying that the distributions that can be achieved by time evolution are fundamentally constrained by Equation \ref{eq:rw_locality}. This property will be denoted as the \textit{local convex evolution of probabilities}.
 
In matrix form, Equation \ref{eq:rw_locality} is represented as
\begin{align}
    & \pi(t + 1) = P(t)\pi(t), \label{eq:rw}
\end{align}
where $P(t)$ is a stochastic matrix with entries $p_{vu}(t)$ denoting the transition probability to move from vertex $u$ to vertex $v$, at instant $t$.

Note that when $\pi_u(t) = 0$, the values of transition probabilities $p_{vu}(t)$ do not contribute to the diffusion process at further times, i.e $p_{vu}(t)$ does not influence $\pi(t + k)$ for $k > 0$.

\section{Quantum walks as non-homogeneous random walks}\label{sec:equivalence}

The law of total probability and Equation \ref{eq:rw_locality} provide the starting point to establish the equivalence between quantum and random walks. From this perspective, it is necessary to define the non-homogeneous random walk that has $\pi_v(t) = \rho(v, t)$, for all $t$. A sufficient condition is the construction of the time-dependent transition matrix $P(t)$ for which $\rho(t + 1) = P(t)\rho(t)$, for all $v, t$. The existence of such sequence of matrices implies the principle of local convex evolution, in the sense of Equation \ref{eq:rw_locality}, for the full quantum walk operator $SW$, regardless of initial conditions. Theorem \ref{th:qw_locality} establishes the construction of the random walk matrix sequence.

\begin{theorem}[Quantum walk local convex evolution]\label{th:qw_locality}
For any time instant $t$, the evolution of the vertex probability of a quantum walk performed by the action of the unitary operator $SW$ is locally convex and is given by the Markovian matrix
\begin{equation}\label{eq:p_matrix}
    p_{vu}(t) =
    \begin{cases}
      \frac{\rho(v, c, t + 1)}{\rho(u, t)}, & \text{if } \rho(u, t) > 0 \text{ and } (u,v) \in E \\
      \frac{1}{d(u)}, & \text{if } \rho(u, t) = 0 \text{ and } (u,v) \in E \\
      0, & \text{otherwise}
    \end{cases}
\end{equation}
when applied to the vertex probability vector $\rho(t) \in \mathbb{R}^{|V|}_{+}$, where $c = \sigma(u, v)$, and such that $\rho(t + 1) = P(t)\rho(t)$
\end{theorem}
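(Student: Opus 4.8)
The plan is to verify \eqref{eq:p_matrix} directly: show that $P(t)$ is column-stochastic with support inside $E$, and that $\rho(t+1)=P(t)\rho(t)$. Local convexity in the sense of \eqref{eq:rw_locality}--\eqref{eq:rw} is then automatic, since $P(t)$ is a Markovian matrix whose $(v,u)$ entry vanishes unless $(u,v)\in E$. So the two things to establish are (i) every column of $P(t)$ sums to one, and (ii) $\bigl(P(t)\rho(t)\bigr)_v=\rho(v,t+1)$ for all $v$.

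The computational backbone is the amplitude update \eqref{eq:qw_evo}. Reading off the coefficient of $\ket{v,\sigma(u,v)}$ in that expansion, for each edge $(u,v)\in E$ one gets $\Psi\bigl(v,\sigma(u,v),t+1\bigr)=\sum_{j\in C_u} w_{(u,\sigma^{-1}(u,v),j)}\,\Psi(u,j,t)$; that is, the amplitude deposited on the coin state $c=\sigma(u,v)$ of $v$ that the shift associates with $u$ is exactly the amplitude the coin $W$ had placed on the outgoing coin state $\sigma^{-1}(u,v)$ of $u$, just relabelled by \eqref{eq:shift_def}. Hence $\rho\bigl(v,\sigma(u,v),t+1\bigr)=\bigl|\sum_{j\in C_u} w_{(u,\sigma^{-1}(u,v),j)}\,\Psi(u,j,t)\bigr|^2$. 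I will also use the two bookkeeping bijections built into $\eta,\sigma,\sigma^{-1}$: the coin states $C_v$ are in bijection with $N^{-}(v)$ through $u\mapsto\sigma(u,v)$ (each degree of freedom of $v$ corresponds to exactly one inward neighbor), and the coin states $C_u$ are in bijection with $N^{+}(u)$ through $v\mapsto\sigma^{-1}(u,v)$, with $|N^{+}(u)|=|C_u|=d(u)$. The first bijection rewrites \eqref{eq:v_prob} as $\rho(v,t+1)=\sum_{u\in N^{-}(v)}\rho\bigl(v,\sigma(u,v),t+1\bigr)$, which is \eqref{eq:qw_evo_prob}; the second, combined with the coin-unitarity constraints \eqref{eq:coin_eq}--\eqref{eq:coin_diff} applied to the unitary block of $W$ supported on $\{\ket{u,c}:c\in C_u\}$, yields the local mass-conservation identity $\sum_{v\in N^{+}(u)}\rho\bigl(v,\sigma(u,v),t+1\bigr)=\sum_{c\in C_u}|\Psi(u,c,t)|^2=\rho(u,t)$.

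Granting these, item (i) is routine. If $\rho(u,t)>0$, the $u$-th column of $P(t)$ sums to $\rho(u,t)^{-1}\sum_{v\in N^{+}(u)}\rho\bigl(v,\sigma(u,v),t+1\bigr)=1$ by the conservation identity; if $\rho(u,t)=0$, the column carries the value $1/d(u)$ on its $d(u)$ entries indexed by $N^{+}(u)$ and hence also sums to one; all entries are nonnegative by inspection. For item (ii), observe that $\rho(u,t)=0$ forces $\Psi(u,j,t)=0$ for every $j\in C_u$, so $\rho\bigl(v,\sigma(u,v),t+1\bigr)=0$ for every $v\in N^{+}(u)$. Therefore $p_{vu}(t)\,\rho(u,t)=\rho\bigl(v,\sigma(u,v),t+1\bigr)$ in all three cases of \eqref{eq:p_matrix}, and summing over inward neighbors gives $\bigl(P(t)\rho(t)\bigr)_v=\sum_{u\in N^{-}(v)}p_{vu}(t)\,\rho(u,t)=\sum_{u\in N^{-}(v)}\rho\bigl(v,\sigma(u,v),t+1\bigr)=\rho(v,t+1)$, as desired.

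I expect the only genuinely delicate point to be the bookkeeping around vertices with $\rho(u,t)=0$: the value $1/d(u)$ assigned there is essentially arbitrary and must keep $P(t)$ stochastic without spoiling $\rho(t+1)=P(t)\rho(t)$, and the resolution is precisely that such a column is killed by the factor $\rho(u,t)$ in the matrix--vector product while a vanishing source amplitude independently forces all the arriving probabilities $\rho\bigl(v,\sigma(u,v),t+1\bigr)$ to vanish. Apart from that, the proof is the index accounting between $\eta$, $\sigma$ and $\sigma^{-1}$ together with a single appeal to coin unitarity, so the main effort lies in stating that accounting cleanly rather than in any substantive difficulty.
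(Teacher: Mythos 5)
Your proof is correct and follows essentially the same route as the paper's: both reduce everything to the bijections between coin states and neighbors together with unitarity of the local coin block, which yields the column-sum identity $\sum_{v\in N^{+}(u)}\rho(v,\sigma(u,v),t+1)=\rho(u,t)$, and both handle the $\rho(u,t)=0$ columns by noting they are annihilated in the matrix--vector product. The only difference is cosmetic: the paper proves $p_{vu}(t)\leq 1$ separately via Cauchy--Schwarz, whereas you obtain it for free from nonnegativity plus the unit column sum (and you are a bit more explicit than the paper about verifying property 3).
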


\begin{proof}
To completely prove the claim, it is necessary to show that the following three properties hold for $P$:

\begin{center}
    \begin{enumerate}
        \item $ 0 \leq p_{vu}(t) \leq 1$ for every $u, v \in V$;
        \item $ \sum_{v \in N^{+}(u)}^{} p_{vu}(t) = 1$ for each $v \in V$;
        \item $ \rho(v, t + 1) = \sum_{u \in N^{-}(v)} p_{vu}(t)\rho(u, t)$ for each $v \in V$.
    \end{enumerate}
\end{center}

Whenever $\rho(u, t) = 0$, choosing $p_{vu}(t) = \frac{1}{d(u)}$ avoids division by zero and assures the first and the second conditions. Since $p_{vu}(t)\rho(u, t) = 0$ for this particular case, the task is to show that the three conditions hold for  $\rho(u, t) > 0$. Note that $p_{vu}(t)$ could be chosen arbitrarily, as long as the $u\text{-th}$ column of $P$ respected conditions $1$ and $2$. Uniform weights were chosen for simplicity.
Using Equation \ref{eq:qw_evo_prob}, and taking $c \in C_v$, $c = \sigma(u, v)$ and $c' = \sigma^{-1}(u,v)$, one has:

\begin{align}
    & \rho(v, c, t + 1) = \abs{\sum_{j \in C_u}\Psi(u, j, t)w_{uc'j}}^2 \label{eq:qw_v_prob} \\
    & p_{vu}(t) = \frac{\abs{\sum_{j \in C_u}\Psi(u, j, t)w_{uc'j}}^2}{\rho(u, t)}. \label{eq:qw_beta_frac}
\end{align}
The numerator on the right-hand side of Equation \ref{eq:qw_beta_frac} can be thought of as the result of the inner product between the vectors $\ket{\Psi^*(u,t)}$ and $\ket{W_u}$ with $j\text{-th}$ coordinates respectively given by $\ket{\Psi^{*}(u,t)}_j = \Psi^{*}(u,j,t)$ and $\ket{W_u}_j = w_{uc'j}$, $j \in C_u$. By the Cauchy-Schwarz inequality
\begin{align}
    & \abs{\braket{\Psi^{*}(u, t)}{W_u}}^2 \leq \braket{\Psi^*(u, t)}{\Psi^*(u, t)} \braket{W_u}{W_u}.
\end{align}
Since $\braket{W_u}{W_u} = 1$ due to the unitarity of $W$ (Equation \ref{eq:coin_eq}),
\begin{align}
    & \abs{\sum_{j \in C_u}\Psi(u, j, t)w_{(uc'j)}}^2 \leq \rho(u, t) \label{eq:beta_small1_proof}
\end{align}
implies that $p_{vu}(t) \leq 1$. As both the numerator and the denominator of Equation \ref{eq:qw_beta_frac} are positive, $p_{vu}(t) \geq 0$, proving property $1$.

Furthermore, the numerator of the sum of conditional probabilities
\begin{align}
    & \sum_{v \in N^{+}(u)} p_{vu}(t) = \frac{\sum_{v \in N^{+}(u)}\abs{\sum_{j \in C_u}\Psi(u, j, t)w_{(u,\sigma^{-1}(u, v),j)}}^2}{\rho(u, t)} \label{eq:conv_comb0}
\end{align}
is exactly the value of the inner product $\ev{W^{\dagger}W}{\Psi(u, t)}$, with 
$$\ket{\Psi(u, t)} = \sum_{i \in C_u}\Psi(u, i, t)\ket{u, i}.$$
To see this, note that the correspondence given by the function $\sigma^{-1}(u, v)$ between degrees of freedom is unique, as well as the correspondence between the degrees of freedom of $u$ and its neighbors, yielding
\begin{align}
& \sum_{v \in N^{+}(u)} \abs{\sum_{j \in C_u}\Psi(u, j, t)w_{(u, \sigma^{-1}(u, v), j)}}^2 = \sum_{k \in C_u} \abs{ \sum_{j \in C_u}\Psi(u, j, t)w_{(u,k,j)} }^2.
\end{align}
Due to the unitarity of $W$, such inner product is precisely $\rho(u, t)$, proving property $2$.

Property $3$ follows trivially from the definition of the Markovian matrix $P$ in Equation \ref{eq:p_matrix} and from the orthogonality of the basis states.
\end{proof}

Theorem \ref{th:qw_locality} establishes that any discrete-time coined quantum walk with unitary operators $W$ and $S$, respectively described by Equation \ref{eq:coin_op} and Relation \ref{eq:shift_def}, is statistically equivalent, from the perspective of vertex probability evolution, to a non-homogeneous random walk over the same graph. Note that both $W$ and $S$ may depend on time, as long as the graph connectivity restrictions remain valid. 
\section{Generalization for multiple walkers} \label{sec:multiple}

To extend Theorem \ref{th:qw_locality} for multiple walkers, some additional definitions are needed.
In particular, the Hilbert space in which the process unfolds grows to allow for the joint description of the walkers. Let $K$ denote the number of walkers and, again, let $\hilb_w$ denote the Hilbert space for a single-walker on $G$. The enlarged space for $K$ walkers is $\hilb_w^{K} = \bigotimes_{i=1}^ {K} \hilb_{w}$. Let $\bm{v} = (v_1,...,v_k)$ denote an ordered sequence of $K$ vertices and $\bm{c} = (c_1,...,c_k)$ denotes its associated degrees of freedom such that $c_i \in C_{v_i}$. Let the set $B^{K} = \{\ket{\bm{v}, \bm{c}}\}$ denote a basis for $\hilb_{w}^{K}$ of which elements represents the joint position of the $K$ walkers. Let
$$\ket{\Psi(t)} = \sum_{\ket{\bm{v}, \bm{c}} \in B^{K}} \Psi(\bm{v}, \bm{c}, t) \ket{\bm{v}, \bm{c}}$$
denote the state of the system at instant $t$ and $\rho(\bm{v}, \bm{c}, t)$ be the joint probability distribution of states at instant $t$.
Assuming each walker can behave differently, with specific coins and shift operators, let $W_i$ and $S_i$ respectively denote the coin and shift operator for the $i\text{-th}$ walker, implying that the full operators are of the form $S = \bigotimes_{i=1}^{K}S_i$ and $W = \bigotimes_{i=1}^{K}W_i$.
If there is no interaction among the walkers, the system evolves, in the enlarged space, as in Equation \ref{eq:qwalk} and the joint distribution of vertices at an instant $t$ is merely $$\rho(\bm{v}, t) = \prod_{i=1}^ {K} \rho(v_i, t).$$

A more interesting scenario appears when the walkers can interact, allowing a dependency among the marginal probability distributions of the walkers. Let $U : \hilb_{w}^{K} \rightarrow \hilb_{w}^{K}$ be a unitary operator defined as 
\begin{equation}
    U = \sum_{\ket{\bm{v, c}} \in B^{K}} \sum_{\bra{\bm{v, c'}} \in B^{K\dagger}} \theta(\bm{v, c, c'})\dyad{\bm{v, c}}{\bm{v, c'}} \label{eq:inter}
\end{equation}
which accounts for walker interactions, such that the whole system state evolves as
\begin{equation}
    \ket{\Psi(t + 1)} = SWU\ket{\Psi(t)}. \label{eq:mult_evo}
\end{equation}

Under constraints of unitarity, the interactions performed by $U$ can be arbitrarily defined by specifying the values of $\theta(\bm{v, c, c'})$. Its inherent restriction resides on the self-mapping of the set of states that represent the vertex position for the $K$ walkers, as $\ket{\bm{v, c}}$ cannot be mapped to $\ket{\bm{u, c'}}$ for $\bm{u} \neq \bm{v}$. This mapping implies that $U$ does not move any of the walkers, confining movement to the action of the enlarged shift operator $S$. Nevertheless, diverse operations are allowed by $U$, such as generic controlled phase shifts, amplitude mixing and even amplitude shifts within the degrees of freedom of a walker controlled by the position of the others. Within this framework, the connectivity restrictions of the dispersion of the wavefunction are maintained, since amplitudes can only be transmitted through the edges of the graph. Theorem \ref{th:qw_mult_locality} follows as an extension of Theorem \ref{th:qw_locality} for this broader context, in which the movement of $K$ quantum walks is shown to be statistically equivalent to that of $K$ non-homogeneous random walks.

\begin{theorem}[Local convex evolution of multiple interacting walkers] \label{th:qw_mult_locality}
For any time instant $t$, the evolution of vertex probabilities for the $K$ walkers performed by the action of the unitary operator $SWU$ is locally convex, and is given by the Markovian matrix
\begin{equation}\label{eq:p_mult_matrix}
    p_{\bm{vu}}(t) =
    \begin{cases}
      \frac{\rho(\bm{v}, \bm{c}, t + 1)}{\rho(\bm{u}, t)}, & \text{if } \rho(\bm{u}, t) > 0 \text{ and } (u_i, v_i) \in E \text{ for all } $i$ \\
      \frac{1}{d(\bm{u})}, & \text{if } \rho(\bm{u}, t) = 0 \text{ and } (u_i, v_i) \in E \text{ for all } $i$ \\
      0, & \text{otherwise}
    \end{cases}
\end{equation}
when applied to the vertex distribution vector $\rho(t) \in \mathbb{R}^{|V|^k}_+$,
where $\bm{c} = \sigma(\bm{u},\bm{v})$ and $i \in \{1,...,K\}$, and such that $\rho(t + 1) = P(t)\rho(t)$.
\end{theorem}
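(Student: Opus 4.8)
The plan is to reduce Theorem~\ref{th:qw_mult_locality} to Theorem~\ref{th:qw_locality} by regarding the $K$-walker system as a single quantum walk on a product graph. First I would introduce $G^K$ with vertex set $V^K$ and edge set $E^K = \{(\bm u, \bm v) : (u_i, v_i) \in E \text{ for all } i\}$; its out-degree at a vertex $\bm v$ is $d(\bm v) := \prod_{i=1}^{K} d(v_i)$, and its coin index set at $\bm v$ is $C_{\bm v} := C_{v_1} \times \cdots \times C_{v_K}$, so that the enlarged Hilbert space $\hilb_w^K$ coincides (up to the obvious relabelling of index sets) with the single-walker space associated with $G^K$. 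The maps $\bm\eta$, $\bm\sigma$, $\bm\sigma^{-1}$ on $G^K$ are taken to be the component-wise tuples, e.g.\ $\bm\eta(\bm v, \bm c) = (\eta_1(v_1,c_1),\dots,\eta_K(v_K,c_K))$, and these inherit the bijectivity properties demanded by Relation~\ref{eq:shift_def} directly from their components. Note that $E^K$ is symmetric because $E$ is.

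Next I would check that $(S, WU)$ is an admissible shift/coin pair on $G^K$ in the sense of Section~\ref{sec:representation}. The shift $S = \bigotimes_{i} S_i$ sends $\ket{\bm u, \bm c}$ to $\ket{\bm\eta(\bm u, \bm c),\, \bm\sigma(\bm u, \bm\eta(\bm u, \bm c))}$, which is exactly Relation~\ref{eq:shift_def} for $G^K$, and it is unitary as a tensor product of unitaries. Setting $\tilde W := WU$, the key point is that $\tilde W$ is block-diagonal with respect to $\hilb_w^K = \bigoplus_{\bm v \in V^K} \mathrm{span}\{\ket{\bm v, \bm c} : \bm c \in C_{\bm v}\}$: each factor $W_i$ preserves the vertex register by Equation~\ref{eq:coin_op}, and $U$ preserves it by the self-mapping restriction built into Equation~\ref{eq:inter}, so their product transmits no amplitude between distinct vertices of $G^K$. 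Hence $\tilde W$ has the general coin form of Equation~\ref{eq:coin_op} on $G^K$, with coefficients $w_{\bm v \bm j \bm k} := \bra{\bm v, \bm j}\, WU \,\ket{\bm v, \bm k}$, and, being a product of unitaries, it is unitary, so these coefficients obey the analogues of Equations~\ref{eq:coin_eq} and~\ref{eq:coin_diff}.

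With these facts, the evolution $\ket{\Psi(t+1)} = SWU\ket{\Psi(t)} = S\tilde W\ket{\Psi(t)}$ of Equation~\ref{eq:mult_evo} is a discrete-time coined quantum walk on $G^K$ in the sense of Equation~\ref{eq:qwalk}, with vertex probability $\rho(\bm v, t) = \sum_{\bm c \in C_{\bm v}} \abs{\Psi(\bm v, \bm c, t)}^2$. Applying Theorem~\ref{th:qw_locality} to this walk immediately yields a stochastic matrix $P(t)$ with $\rho(t+1) = P(t)\rho(t)$ whose entry $p_{\bm v \bm u}(t)$ equals $\rho(\bm v, \bm c, t+1)/\rho(\bm u, t)$ when $\rho(\bm u, t) > 0$ and $(\bm u, \bm v) \in E^K$, equals $1/d(\bm u)$ when $\rho(\bm u, t) = 0$ and $(\bm u, \bm v) \in E^K$, and vanishes otherwise, with $\bm c = \bm\sigma(\bm u, \bm v)$. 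Unpacking $(\bm u, \bm v) \in E^K$ as $(u_i, v_i) \in E$ for every $i$ and using $d(\bm u) = \prod_i d(u_i)$, this is exactly Equation~\ref{eq:p_mult_matrix}.

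I expect the main obstacle to be the structural bookkeeping of the second step: making fully rigorous that the non-movement restriction on $U$ is precisely what keeps $WU$ inside the general coin family on $G^K$ --- once block-diagonality of $\tilde W$ in the $\bm v$-register is nailed down, all three defining properties of $P(t)$ are inherited verbatim from Theorem~\ref{th:qw_locality}. If one prefers to sidestep the product-graph reduction, the alternative is to rerun the Cauchy--Schwarz-plus-unitarity argument of that proof with $\tilde W = WU$ and boldface indices: property~1 from $\abs{\sum_{\bm j}\Psi(\bm u,\bm j,t)\, w_{\bm u \bm c' \bm j}}^2 \le \rho(\bm u, t)\sum_{\bm j}\abs{w_{\bm u \bm c' \bm j}}^2 = \rho(\bm u,t)$, property~2 from $\ev{\tilde W^\dagger \tilde W}{\Psi(\bm u,t)} = \rho(\bm u,t)$, and property~3 from the definition of $P(t)$ and orthogonality of the basis, with no new idea required.
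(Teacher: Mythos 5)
Your proposal is correct, and its main line of attack is genuinely different from the paper's. You both begin with the same product-graph construction ($G^K$ with $V^K$, $E^K$, $d(\bm u)=\prod_i d(u_i)$, and componentwise $\eta$, $\sigma$), but from there you reduce the theorem to Theorem~\ref{th:qw_locality} by verifying that $S=\bigotimes_i S_i$ is an admissible shift on $G^K$ and that $\tilde W = WU$ is a legitimate generalized coin in the sense of Equation~\ref{eq:coin_op} --- unitary and block-diagonal in the joint vertex register, the latter because each $W_i$ preserves its vertex factor and $U$ is vertex-preserving by construction in Equation~\ref{eq:inter}. The paper does not invoke Theorem~\ref{th:qw_locality} as a black box; it re-derives the three stochasticity properties directly, writing $\rho(\bm v,\bm c,t+1)/\rho(\bm u,t)$ as $\norm{\dyad{\bm{v,c}}{\bm{v,c}}\,SWU\,\ket{\Psi(\bm u,t)}}^2/\norm{\ket{\Psi(\bm u,t)}}^2$ and using unitarity of the composite $SWU$ together with the locality of the shift to bound each entry by $1$ and to show the column sums equal $1$. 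Your reduction buys modularity and makes explicit the structural fact the paper leaves implicit --- that the non-movement restriction on $U$ is exactly what keeps $WU$ inside the coin family --- at the cost of having to do that bookkeeping carefully; the paper's projector-and-unitarity argument sidesteps the need to exhibit $WU$ in the form of Equation~\ref{eq:coin_op} and is closer in spirit to your stated fallback of rerunning the Cauchy--Schwarz argument with boldface indices. Both routes rest on the same two pillars (vertex-preservation of the coin-plus-interaction step and unitarity of the full evolution), so I see no gap in your argument.
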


\begin{proof}
The random walk dictated by matrix $P(t)$ accounts for the joint movement of the walkers in the sense that an index $\bm{u}$ of $P$ is a vector of dimension $K$ and denotes the position of the walkers. To formalize, let $G'=(V', E')$ denote a graph with $V' = V^{K}$ and $E' = E^{K}$, such that, for all $\bm{v}, \bm{u} \in V'$ with $\bm{v} = (v_1,...,v_K)$ and $\bm{u} = (u_1,...,u_K)$, $e = (\bm{v}, \bm{u}) \in E'$ if, and only if $(v_i, u_i) \in E$ for all $i$. Note that $d(\bm{u}) = \prod_{i = 1}^{K} d(u_i)$. In particular, each vertex of $G'$ represents the simultaneous position of all walkers and its edges codifies all of their possible combined movements.
It must be shown that $P(t)$ indeed represents a non-homogeneous random walk over $G'$ and that its vertex probability evolution matches Equation \ref{eq:mult_evo}.

The three properties which where shown to hold for Theorem \ref{th:qw_locality} are to be demonstrated for this general case, since the requirements for one walker extend to $K$ walkers naturally.
Let
\begin{align}
    & \ket{\Psi(\bm{u}, t)} = \sum_{\bm{c} \in C_{\bm{u}}}\Psi(\bm{u, c}, t)\ket{\bm{u, c}}
\end{align}
denote the overall state of $\bm{u} \in V'$ such that $\norm{\ket{\Psi(\bm{u}, t)}}^{2} = \rho(\bm{u}, t)$. Note that $\ket{\Psi(\bm{u}, t)} \in \hilb_w^{K}$, that $\bm{c} = (c_1,...,c_K)$ is a tuple denoting the degrees of freedom of each walker and that the functions $\eta$ and $\sigma$ are now defined for tuples of vertices and degrees of freedom. 
Assuming that $\bm{v} = \eta(\bm{u, c'})$ and $\sigma(\bm{u}, \eta(\bm{u, c'})) = \bm{c}$ for a given  $\bm{c'} \in C_{\bm{u}}$, the action of $SWU$ gives
\begin{align}
    & \frac{\rho(\bm{v, c}, t + 1)}{\rho(\bm{u}, t)} = \frac{\norm{\dyad{\bm{v,c}}{\bm{v,c}} SWU \ket{\Psi(\bm{u}, t)}}^{2}}{\norm{\ket{\Psi(\bm{u}, t)}}^{2}}. \label{eq:ratio_mult}
\end{align}
Since $SWU$ is unitary and $\braket{\bm{v, c}}{s} \leq 1$ for any unitary $\ket{s} \in \hilb_{w}^{K}$, the Inequality
\begin{align}
    & 0 \leq \frac{\norm{\dyad{\bm{v,c}}{\bm{v,c}} SWU \ket{\Psi(\bm{u}, t)}}^{2}}{\norm{\ket{\Psi(\bm{u}, t)}}^{2}} \leq 1 \label{eq:unit_ratio}
\end{align}
demonstrates property $1$.

Simultaneously, the action of $SWU$ also implies that the Inequality
\begin{align}
     & \dyad{\bm{v, c_{v}}}{\bm{v, c_{v}}} SWU \ket{\Psi(\bm{u}, t)} \neq 0 \label{eq:mult_nonzero}
\end{align}
is only valid for $\bm{v} \in V$ and $\bm{c_v} \in C_{\bm{v}}$ if $\bm{v} = \eta(\bm{u, c})$ and $\bm{c_{v}} = \sigma(\bm{u, c})$ for some $\bm{c} \in C_{\bm{u}}$. Assuming that $\bm{v} = \eta(\bm{u,c})$ and $\bm{c_{vu}} = \sigma(\bm{u, c})$, the last condition gives
\begin{align}
    & \sum_{\bm{v} \in N^{+}(\bm{u})} \frac{\norm{\dyad{\bm{v, c_{vu}}}{\bm{v, c_{vu}}} SWU \ket{\Psi(\bm{u}, t)}}^{2}}{\norm{\ket{\Psi(\bm{u}, t)}}^{2}} = 1. \label{eq:convex_mult}
\end{align}
Due to the fact that $SWU$ is a unitary operator, Equation \ref{eq:convex_mult} and the orthogonality of the basis states lead to properties $2$ and $3$.

\end{proof}

Essentially, Theorem \ref{th:qw_mult_locality} constructs a non-homogeneous random walk on $G'$ that matches the evolution of the joint vertex probability distribution of $K$ walkers induced by $SWU$ and, thus, asserts that the vertex probability distribution of the multiple walker interaction model has a local convex evolution on the vertices of $G'$.
Again, it is worth emphasizing that $SWU$ varies with time, as long as unitarity, graph connectivity and the conditions for the interaction operator $U$ remains valid.
\section{Simulation of quantum walk trajectories} \label{sec:simulation}

Theorems \ref{th:qw_locality} and \ref{th:qw_mult_locality} establish respectively the construction procedure for a non-homogeneous random walk which is statistically equivalent to any given single- and multiple-walker quantum walk. This random walk can be simulated to generate graph trajectories that capture the quantum walk behavior. The simulation of a random walk naturally constructs random paths on a graph. At each instant $t + 1$, the walker can only be found in an outward neighbor of node $v$, given that it was in node $v$ at instant $t$. Thus, the simulation constructs a sample path that ensures neighbor locality. We denote this sample path by \textit{quantum walk trajectory}. Note that this procedure is fundamentally different than the usual simulation procedure for quantum walks, where the distribution $\rho(t)$ is sampled independently at each time instant $t$ and no graph trajectory is constructed. To exemplify, Theorem \ref{th:qw_locality} was used to simulate quantum walks on a $2\text{-D}$ torus with Hadamard and Grover coins, and moving-shift operators, generating the ensembles of trajectories depicted in Fig.\ref{fig:trajectories}. Without loss of generality, the following discussion assumes a single-walker.

\begin{figure}
	\centering
    \begin{minipage}[b]{0.45\textwidth}
      \includegraphics[width=\textwidth]{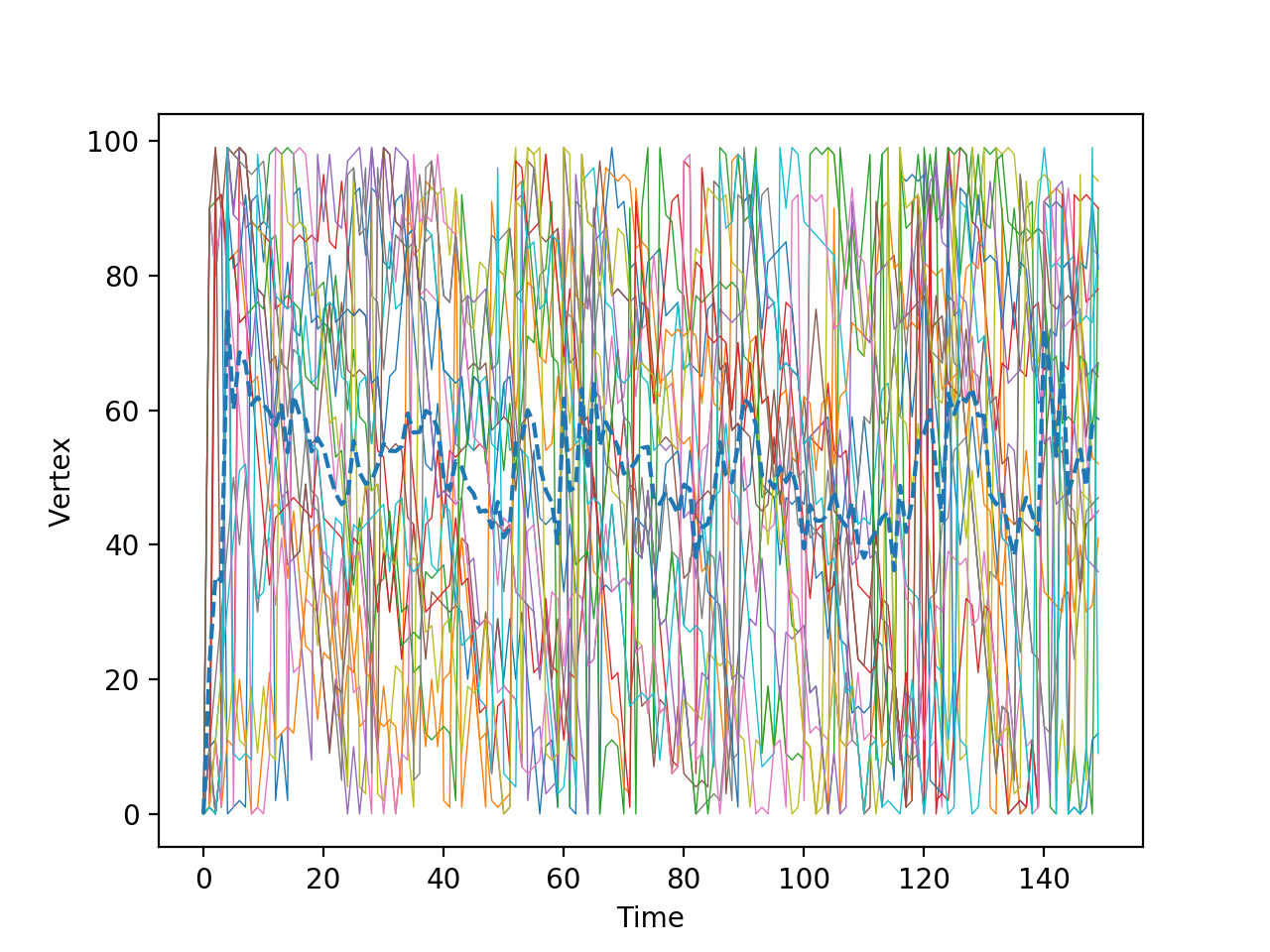}
    \end{minipage}
    \begin{minipage}[b]{0.45\textwidth}
	    \includegraphics[width=\textwidth]{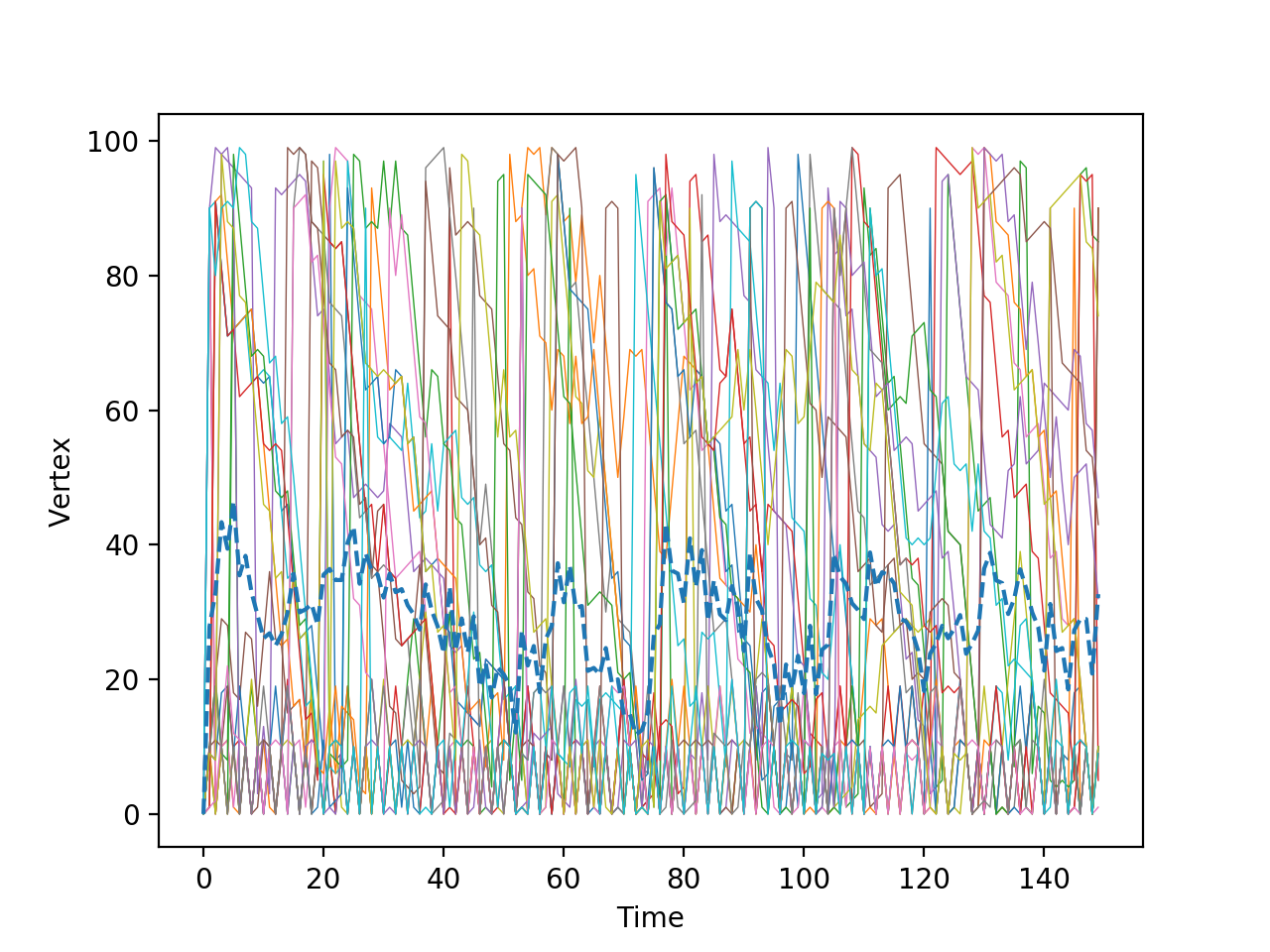}
    \end{minipage}
   	\caption{Ensembles of 20 trajectories obtained using Theorem \ref{th:qw_locality} for a Hadamard-coined (left) and Grover-coined (right) walks on a 10-by-10 2-D torus with moving shift operators for a localized initial state $\ket{0,0}$ (left edge of the origin). Each one of the solid lines corresponds to a given trajectory and the dashed line gives its empirical average of the vertex random variable per instant and indicate the difference between the dynamics induced by the different coin operators.}
    \label{fig:trajectories}
\end{figure}

While one simulated trajectory respects locality, an ensemble of trajectories recover the distribution of the quantum walk for every $t$. In particular, let $\chi = \{\tau_1, ..., \tau_M\}$ be an ensemble of $M$ independent trajectories. Let $\tau_i(t)$ denote the vertex visited by the walker at instant $t$ in the $i\text{-th}$ trajectory. Let $\mathbbm{1}(.)$ denote an indicator function activated by its argument condition. Let 
\begin{align}
    & \hat{p}_{u}^{M}(t) = \frac{1}{M}\sum_{i = 1}^{M} \mathbbm{1}(\tau_i(t) = u) \label{eq:rel_frac}
\end{align}
denote the fraction of time node $u$ was visited by the walker at instant $t$. Thus, by the law of large numbers, $\hat{p}_{u}^{M}(t) \xrightarrow{} \rho_{u}(t)$ as $M \xrightarrow{} \infty$ and the trajectories recover the node distribution of the quantum walk for all $t$.

Convergence is observed through the decreasing behavior of the total variation distance
\begin{align}
    & D_t(p, \rho) = \frac{1}{2} \sum_{v}\abs{\hat{p}_{v}^{M}(t) - \rho(v, t)} \label{eq:tvd}
\end{align}
between the empirical vertex distribution of the trajectory ensemble and the quantum walk vertex distribution, as it can be seen in Fig.\ref{fig:trajectories_tvd} for a Grover-coined quantum walk on the $2\text{-D}$ torus with moving-shift operator.

\begin{figure}
	\centering
    \begin{minipage}[b]{0.45\textwidth}
      \includegraphics[width=\textwidth]{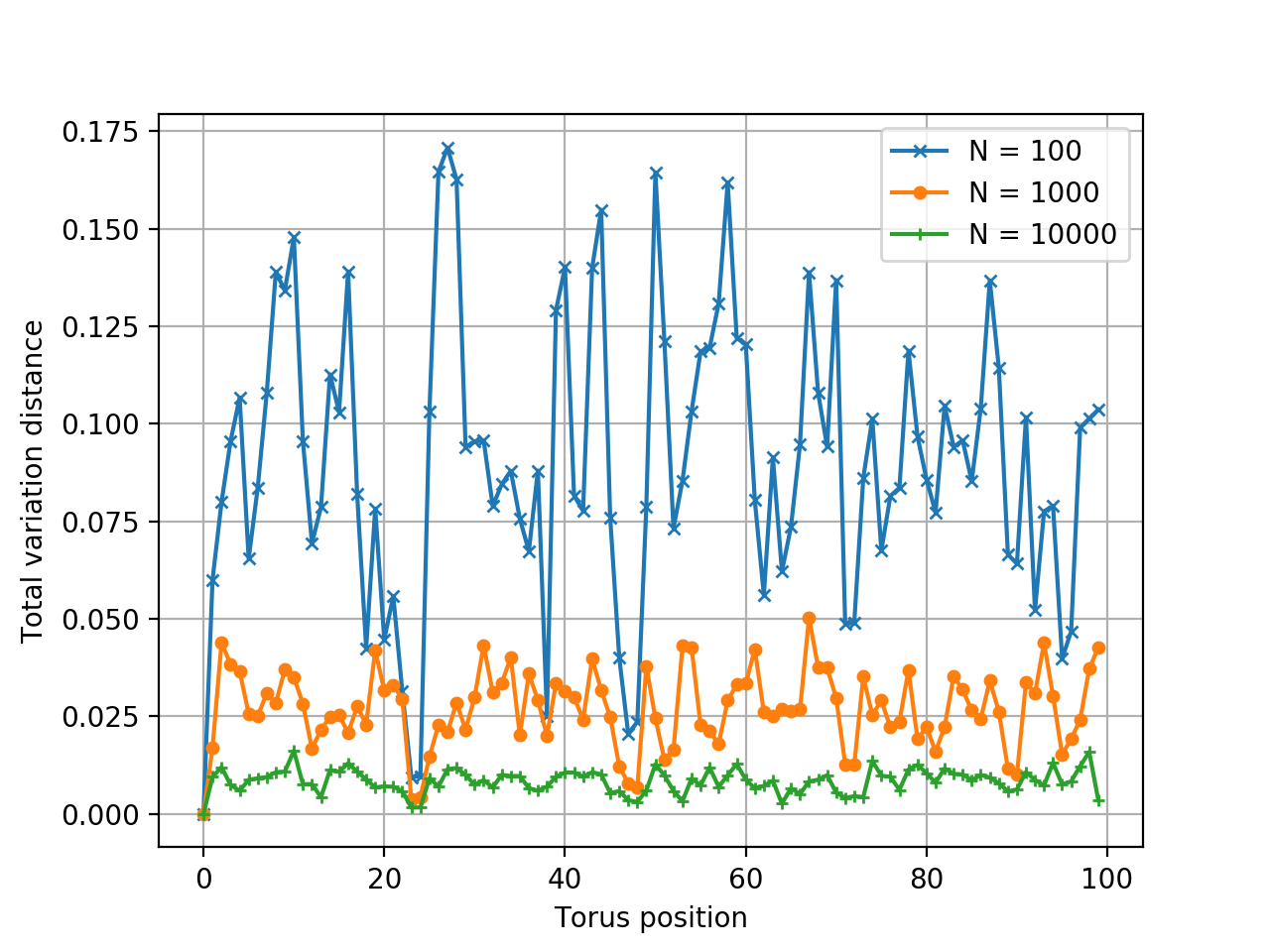}
    \end{minipage}
    \begin{minipage}[b]{0.45\textwidth}
	    \includegraphics[width=\textwidth]{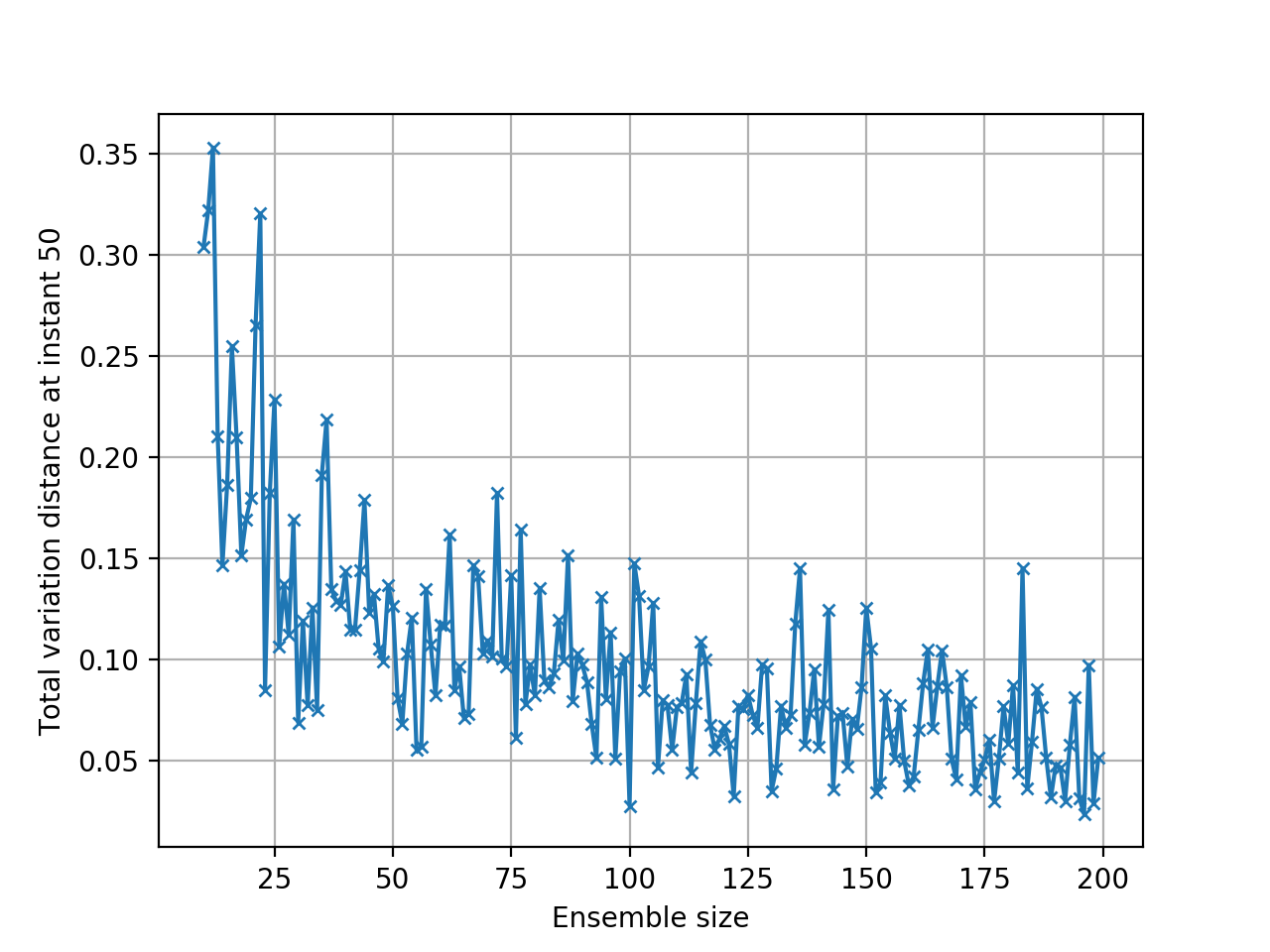}
    \end{minipage}
   	\caption{Analysis of the total variation distance between the empirical distribution obtained from trajectory sampling and the quantum walk distribution of a Grover-coined walk with moving shift on a 10-by-10 2-D torus. The left figure shows the total variation value for the distribution at each instant, with different curves representing different ensemble sizes. The right figure gives the distance value for a fixed time instant when the ensemble size grows.}
    \label{fig:trajectories_tvd}
\end{figure}

The non-homogeneous random walk simulation is a novel perspective for the study of quantum walks as it gives an efficient (polynomial) procedure for sampling trajectories which recover, by the law of large numbers, the vertex probability sequence of any quantum walk. As a matter of fact, measuring quantum walks on a possible physical implementation or independently sampling vertices from the quantum sequence of distributions do not address this question, since samples are obviously independent and there is no guarantee for obtaining trajectories. An alternative to sample trajectories would be to consider a rejection method that accepts only sequences of vertices that correspond to paths in the graph.
However, the marginal empirical distributions $\hat{p}_v(t)$ within the accepted trajectories would not necessarily match $\rho(v, t)$. See details in Appendix \ref{ap:rejection}.

\section{Complexity of random walk description and simulation} \label{sec:complexity}

An interesting question which arises once Theorems \ref{th:qw_locality} and \ref{th:qw_mult_locality} are considered is the computational complexity involved in constructing the corresponding random walk matrices $P(t)$. The information required to compute its entries at time instant $t$ are the probability distributions of the quantum walk at times $t$ and $t + 1$. Thus, if the state and the vertex distributions for time $t$ and $t + 1$ are known, constructing the matrix $P(t)$ has an intrinsic complexity of $O(|V|^2)$, since each of its entries can be computed in $O(1)$.

In general, however, the matrix can be computed by using Equation \ref{eq:qwalk} to calculate both $\ket{\psi(t)}$ and $\ket{\psi(t + 1)}$. Assuming that both $S$ and $W$ may vary with time, the cost for computing the wave function is $O(t|E|^2)$, since $t$ matrix-by-vector multiplications are performed, each with complexity $O(|E|^2)$. Computing the vertex distribution from the wavefunction has complexity $O(|E|)$, since the probability of each outward edge of a vertex must be considered. Hence, the overall complexity is $O(t|E|^2 + |E| + |V|^2) \in O(t|E|^2)$.

The computation of the quantum walk wavefunction is the general bottleneck for constructing the random walk matrices, unless the probability distributions of the quantum walk can be computed more efficiently. In terms of complexity, the problem of describing the probability evolution of the non-homogeneous random walk is at least as hard as solving the quantum walk distribution.
Nonetheless, specific walker systems can have their wavefunctions computed by algorithms that are more efficient than direct matrix multiplication. Walker dynamics with known closed-formula expressions for the wavefunction are an interesting case. For example, a generic coined quantum walk on an infinite line for which the walker moves in a single direction or remains on its position at every instant has known explicit probability distribution for all time $t$~\cite{Montero2015}.

Alternatively, for particular quantum walks and graphs, the wavefunction and the vertex probability distribution may be computed recursively and more efficiently than the general approach (see Appendix \ref{ap:grover} for an example).

\section{Conclusions} \label{sec:conclusions}

As the central contribution of this work, Theorems \ref{th:qw_locality} and \ref{th:qw_mult_locality} establish a construction procedure for non-homogeneous random walks that yield the same vertex probability distribution sequence of any single or multiple quantum walk. Besides establishing a formal equivalence between the two processes, this procedure allows for the efficient simulation of quantum walk trajectories, which can be used to investigate quantum walks from the perspective of vertex locality, as opposed to the simulation of independent samples over time. In a nutshell, the Theorems establish a formal correspondence between random and quantum walks on the same graph by showing that the vertex distribution of the two processes are identical for all time $t$. Moreover, any statistical property of a quantum walk can be analyzed through quantum walk trajectories. This concept and its simulation could possibly be exploited to evaluate theoretical properties and concepts of Markov chains which were initially modified to address quantum walks, such as mixing, dispersion and hitting time \cite{aharonov2001quantum}, in their original circumstances.

Due to the universality of quantum walks for quantum computation, both Theorems may have important implications in the development of this larger field.The connections between generic computational processes and time-dependent Markov chains can be explored to guide new interesting research on quantum computing. On the other hand, Theorems \ref{th:qw_locality} and \ref{th:qw_mult_locality} do not provide improvements on the computation of the state probability distribution sequences of quantum walks, since the construction of the non-homogeneous random walk requires solving the quantum problem.


While this work showed that any single or multiple quantum walk has a corresponding random walk, an interesting future consideration would be establishing the reverse correspondence. In particular, answering whether or not any (single and multiple) random walk has a corresponding quantum walk.
\begin{appendices}
\section{Limitations of the rejection method for QWT} \label{ap:rejection}
Assume that quantum walk trajectories of length $L$ are to be sampled, such that the marginal vertex probability within the trajectories, $p(v, t)$, is exactly $\rho(v, t)$, for $v \in V$ and $t \in \{0,...,L - 1\}$.
Let $\mathcal{X}$ and $\mathcal{T}$ denote the set of all possible sequences of measurements and graph trajectories with length $L$, respectively. Let $\mathcal{X}_{v}^{T} \subset \mathcal{X}$ and $\mathcal{T}_{v}^{t} \subset \mathcal{T}$ denote the set of all sequences and trajectories of length $L$ in which $v$ appears in position $t$. It follows trivially from independence that the probability of a sequence of measurements $\tau \in \mathcal{X}$ is
\begin{align}
    & p(\tau) = \prod_{i=1}^{L}\rho(\tau_t),
\end{align}{}
where $\tau_t$ denotes the vertex measured at $t$. In this case, the vertex probability at $t$ is simply, 
\begin{align}
    & \rho(v, t) = \sum_{\tau \in \mathcal{X}_{v}^{t}} p(\tau) \label{eq:rho_prob}.
\end{align}
However, rejecting non-trajectory samples yields
\begin{align}
    &  p(v, t) = \frac{\sum_{\tau \in \mathcal{T}_{v}^{t}} p(\tau)} {\sum_{\tau^{'} \in \mathcal{T}} p(\tau^{'})}. \label{eq:rejec_prob}
\end{align}
It is not clear whether Equations \ref{eq:rho_prob} and \ref{eq:rejec_prob} are equal for every possible quantum walk, since the ratio between the probability of generating trajectories with vertex $v$ at position $t$ and the probability of constructing a trajectory would have to be to $\rho(v, t)$, for all $v \in V$ and $t \in \{0,..., L - 1\}$.

Additionally, even for the cases where Equations \ref{eq:rho_prob} and \ref{eq:rejec_prob} are equal, the expected time to accept a sample in the rejection procedure is precisely the inverse of the probability of generating a trajectory. Although this probability depends on a myriad of factors, which brings difficulties to a general analytical evaluation of the sampling efficiency, the number of trajectories of a given length $L$ within a graph can be exponentially smaller than the number of possible sequences of measurements. As an example, a $D\text{-dimensional}$ torus with $V$ vertices would have $VD^{L - 1}$ paths of length $L$ and $V^{L}$ possible sequences, which for values of $D \in O(1)$ is exponentially larger than the number of trajectories, suggesting that the expected time to generate a trajectory sample would be unfeasible.

In precise terms, Theorem \ref{th:qw_locality} offers a polynomial procedure to sample graph trajectories for any quantum walk. Given the transition matrix $P(t)$, a quantum trajectory of lenght $L$ can be sampled in time $O(Ld_{max})$ where $d_{max}$ is the maximum degree of the graph. \footnote{The alias method could also be used if multiple samples are to be generated in which case the amortized time complexity for the trajectories is $O(L)$.}


\section{Dynamic programming for Grover walk on torus} \label{ap:grover}
The Grover-coined (Eq.\ref{eq:grover_op}) walk on a $D\text{-dimensional}$ torus with moving-shift operator (Eq.\ref{eq:moving_shift}) and purely real initial conditions serves as an example where the probability distribution can be computed by a dynamic programming algorithm that is more efficient than direct matrix multiplication. The number of degrees of freedom within the $D\text{-dimensional}$ torus is $2D$. Analyzing the action of the total walk operator $MG$ on a given state
$$\Psi(u, t) = \sum_{c \in C_u} \Psi(u, c, t) \ket{(u,c)}$$
and having $\eta(u, c) = v$, the probability $\rho(v, c, t+1)$ is described as
\begin{align}
    & \rho(v, c, t + 1) = \abs{\frac{\sum_{c' \in C_u} \Psi(u, c', t)}{D} - \Psi(u, c, t)}. \label{eq:gr_tor1}
\end{align}

Assuming that $\psi(u,c,t) = \sqrt{\rho(u,c,t)}e^{i\cos{\theta_{uct}}}$ and noting that $\theta_{uct} = 0$ for every $u$, $c$ and $t$ whenever purely real initial conditions are considered yields
\begin{align}
     & \rho(v, c, t + 1) = \abs{\frac{\sum_{c' \in  C_u} \sqrt{\rho(u, c', t)}}{D} - \sqrt{\rho(u, c, t)}}. \label{eq:gr_tor2}
\end{align}
From Theorem \ref{th:qw_locality}, the entries of the random walk matrices for which $\rho(v, c, t) > 0$ are given by 
\begin{align}
    & \frac{\rho(v, c, t + 1)}{\rho(u, t)} = \frac{1}{\rho(u, t)} \abs{\frac{\sum_{c' \in  C_u} \sqrt{\rho(u, c', t)}}{D} - \sqrt{\rho(u, c, t)}} \label{eq:gr_tor3}.
\end{align}{}


Equation \ref{eq:gr_tor3} can be solved through a dynamic programming algorithm in which each time instant has complexity $O(|V|^2D)$, implying on an overall procedure of complexity $O(t|V|^2D)$ for all matrices up to time $T$. For $D \in O(1)$, the algorithm has complexity $O(t|V|^2)$, showing a quadratic improvement over the generic procedure, since $O(t|E|^2) \in O(t|V|^4)$.
\footnote{$D$ can be at most $|V|$ for the case when the torus degenerates to the complete graph. The overall procedure for this case is in $O(t|V|^3)$, which still represents a linear improvement from the general procedure of matrix multiplication.}



\end{appendices}{}

\bibliography{references}

\end{document}